
\documentclass[fleqn]{llncs}
\usepackage{version}
\usepackage{tapt}

\pagestyle{plain}
\raggedbottom

\title{Thread Algebra for Poly-Threading%
       \thanks{This research has been partly carried out in the
               framework of the Jacquard-project Symbiosis, which is
               funded by the Netherlands Organisation for Scientific
               Research (NWO).}}
\author{J.A. Bergstra \and C.A. Middelburg}
\institute{Programming Research Group, University of Amsterdam, \\
           Kruislaan~403, 1098~SJ~Amsterdam, the Netherlands \\
           \email{J.A.Bergstra@uva.nl,C.A.Middelburg@uva.nl}}

\begin{document}

\maketitle

\begin{abstract}
Threads as considered in basic thread algebra are primarily looked upon
as behaviours exhibited by sequential programs on execution.
It is a fact of life that sequential programs are often fragmented.
Consequently, fragmented program behaviours are frequently found.
In this paper, we consider this phenomenon.
We extend basic thread algebra with the barest mechanism for sequencing
of threads that are taken for fragments.
This mechanism, called poly-threading, supports both autonomous and
non-autonomous thread selection in sequencing.
We relate the resulting theory to the algebraic theory of processes
known as \ACP\ and use it to describe analytic execution architectures
suited for fragmented programs.
We also consider the case where the steps of fragmented program
behaviours are interleaved in the ways of non-distributed and
distributed multi-threading.
\begin{keywords}
poly-threading, thread algebra, process algebra, execution architecture,
non-distributed multi-threading, distributed multi-threading.
\end{keywords}%
\begin{classcode}
D.4.1, F.1.1, F.1.2, F.3.2.
\end{classcode}
\end{abstract}

\section{Introduction}
\label{sect-intro}

In~\cite{BM08b}, we considered fragmentation of sequential programs that
take the form of instruction sequences in the setting of program
algebra~\cite{BL02a}.
The objective of the current paper is to develop a theory of the
behaviours exhibited by sequential programs on execution that covers the
case where the programs have been split into fragments.
It is a fact of life that sequential programs are often fragmented.
We remark that an important reason for fragmentation of programs is that
the execution architecture at hand to execute them sets bounds to the
size of programs.
However, there may also be other reasons for program fragmentation, for
instance business economical reasons.

In~\cite{BL02a}, a start was made with a line of research in which
sequential programs that take the form of instruction sequences and the
behaviours exhibited by sequential programs on execution are
investigated (see e.g.~\cite{BBP05a,BM07e,PZ06a}).
In this line of research, the view is taken that the behaviour exhibited
by a sequential program on execution takes the form of a thread as
considered in basic thread algebra~\cite{BL02a}.%
\footnote
{In~\cite{BL02a}, basic thread algebra is introduced under the name
 basic polarized process algebra.
 Prompted by the development of thread algebra~\cite{BM04c}, which is a
 design on top of it, basic polarized process algebra has been renamed
 to basic thread algebra.
}
With the current paper, we carry on this line of research.
Therefore, we consider program fragment behaviours that take the form of
threads as considered in basic thread algebra.

We extend basic thread algebra with the barest mechanism for sequencing
of threads that are taken for program fragment behaviours.
This mechanism is called poly-threading.
Inherent in the behaviour exhibited by a program on execution is that it
does certain steps for the purpose of interacting with some service
provided by an execution environment.
In the setting of thread algebra, the use mechanism is introduced
in~\cite{BM04c} to allow for this kind of interaction.
Poly-threading supports the initialization of one of the services used
every time a thread is started up.
With poly-threading, a thread selection is made whenever a thread ends
up with the intent to achieve the start-up of another thread.
That thread selection can be made in two ways: by the terminating thread
or externally.
We show how thread selections of the latter kind can be internalized.

Both thread and service look to be special cases of a more general
notion of process.
Therefore, it is interesting to know how threads and services as
considered in the extension of basic thread algebra with poly-threading
relate to processes as considered in theories about concurrent processes
such as \ACP~\cite{BW90}, CCS~\cite{Mil89} and CSP~\cite{Hoa85}.
We show that threads and services as considered in the extension of
basic thread algebra with poly-threading can be viewed as processes that
are definable over the extension of \ACP\ with conditions introduced
in~\cite{BM05a}.

An analytic execution architecture is a model of a hypothetical
execution environment for sequential programs that is designed for the
purpose of explaining how a program may be executed.
The notion of analytic execution architecture defined in~\cite{BP04a} is
suited to sequential programs that have not been split into fragments.
We use the extension of basic thread algebra with poly-threading to
describe analytic execution architectures suited to sequential programs
that have been split into fragments.

In systems resulting from contemporary programming, we find distributed
multi-threading and threads that are program fragment behaviours.
For that reason, it is interesting to combine the theory of distributed
strategic interleaving developed in~\cite{BM07a} with the extension of
basic thread algebra with poly-threading.
We take up the combination by introducing two poly-threading covering
variations of the simplest form of interleaving for distributed
multi-threading considered in~\cite{BM07a}.

The line of research carried on in this paper has two main themes: the
theme of instruction sequences and the theme of threads.
Both~\cite{BM08b} and the current paper are concerned with program
fragmentation, but~\cite{BM08b} elaborates on the theme of instruction
sequences and the current paper elaborates on the theme of threads.
It happens that there are aspects of program fragmentation that can be
dealt with at the level of instruction sequences, but cannot be dealt
with at the level of threads.
In particular, the ability to replace special instructions in an
instruction sequence fragment by different ordinary instructions every
time execution is switched over to that fragment cannot be dealt with at
the level of threads.
Threads, which are intended for explaining the meaning of sequential
programs, turn out to be too abstract to deal with program fragmentation
in full.

This paper is organized as follows.
First, we review basic thread algebra and the use mechanism
(Sections~\ref{sect-BTA} and~\ref{sect-TSU}).
Next, we extend basic thread algebra with poly-threading and show how
external thread selections in poly-threading can be internalized
(Sections~\ref{sect-TApt} and~\ref{sect-internalization}).
Following this, we review \ACP\ with conditions and relate the extension
of basic thread algebra with poly-threading to \ACP\ with conditions
(Sections~\ref{sect-ACPc} and~\ref{sect-thr-serv-proc}).
Then, we discuss analytic execution architectures suited for programs
that have been fragmented (Section~\ref{sect-exe-arch}).
After that, we introduce forms of interleaving suited for
non-distributed and distributed multi-threading that cover
poly-threading (Sections~\ref{sect-TAptsi}, \ref{sect-TAptdsi}
and~\ref{sect-TAptdsifs}).
Finally, we make some concluding remarks (Section~\ref{sect-concl}).

Up to and including Section~\ref{sect-exe-arch}, this paper is a revision
of~\cite{BM08c}.
In that paper, the term ``sequential poly-threading'' stands for
``poly-threading in a setting where multi-threading or any other form of
concurrency is absent''.
We conclude in hindsight that the use of this term is unfortunate and do
not use it in the current paper.

\section{Basic Thread Algebra}
\label{sect-BTA}

In this section, we review \BTA, a form of process algebra which is
tailored to the description of the behaviour of deterministic sequential
programs under execution.
The behaviours concerned are called \emph{threads}.

In \BTA, it is assumed that there is a fixed but arbitrary finite set of
\emph{basic actions} $\BAct$ with $\Tau \not\in \BAct$.
We write $\BActTau$ for $\BAct \union \set{\Tau}$.
The members of $\BActTau$ are referred to as \emph{actions}.

The intuition is that each basic action performed by a thread is taken
as a command to be processed by a service provided by the execution
environment of the thread.
The processing of a command may involve a change of state of the service
concerned.
At completion of the processing of the command, the service produces a
reply value.
This reply is either $\True$ or $\False$ and is returned to the thread
concerned.

Although \BTA\ is one-sorted, we make this sort explicit.
The reason for this is that we will extend \BTA\ with additional sorts
in Sections~\ref{sect-TSU} and~\ref{sect-TApt}.

The algebraic theory \BTA\ has one sort: the sort $\Thr$ of
\emph{threads}.
To build terms of sort $\Thr$, \BTA\ has the following constants and
operators:
\begin{itemize}
\item
the \emph{deadlock} constant $\const{\DeadEnd}{\Thr}$;
\item
the \emph{termination} constant $\const{\Stop}{\Thr}$;
\item
for each $a \in \BActTau$, the \emph{postconditional composition}
operator $\funct{\pcc{\ph}{a}{\ph}}{\Thr \cprod \Thr}{\Thr}$.
\end{itemize}
Terms of sort $\Thr$ are built as usual (see e.g.~\cite{ST99a,Wir90a}).
Throughout the paper, we assume that there are infinitely many variables
of sort $\Thr$, including $x,y,z$.

We introduce \emph{action prefixing} as an abbreviation: $a \bapf p$,
where $p$ is a term of sort $\Thr$, abbreviates $\pcc{p}{a}{p}$.

Let $p$ and $q$ be closed terms of sort $\Thr$ and $a \in \BActTau$.
Then $\pcc{p}{a}{q}$ will perform action $a$, and after that proceed as
$p$ if the processing of $a$ leads to the reply $\True$ (called a
positive reply), and proceed as $q$ if the processing of $a$ leads to
the reply $\False$ (called a negative reply).
The action $\Tau$ plays a special role.
It is a concrete internal action: performing $\Tau$ will never lead to a
state change and always lead to a positive reply, but notwithstanding
all that its presence matters.

\BTA\ has only one axiom.
This axiom is given in Table~\ref{axioms-BTA}.%
\begin{table}[!t]
\caption{Axiom of \BTA}
\begin{eqntbl}
\begin{axcol}
\pcc{x}{\Tau}{y} = \pcc{x}{\Tau}{x}                      & \axiom{T1}
\end{axcol}
\end{eqntbl}
\label{axioms-BTA}
\end{table}
Using the abbreviation introduced above, axiom T1 can be written as
follows: $\pcc{x}{\Tau}{y} = \Tau \bapf x$.

Each closed \BTA\ term of sort $\Thr$ denotes a finite thread, i.e.\ a
thread of which the length of the sequences of actions that it can
perform is bounded.
Guarded recursive specifications give rise to infinite threads.

A \emph{guarded recursive specification} over \BTA\ is a set of
recursion equations $E = \set{X = t_X \where X \in V}$, where $V$ is a
set of variables of sort $\Thr$ and each $t_X$ is a term of the form
$\DeadEnd$, $\Stop$ or $\pcc{t}{a}{t'}$ with $t$ and $t'$ \BTA\ terms of
sort $\Thr$ that contain only variables from $V$.
We write $\vars(E)$ for the set of all variables that occur on the
left-hand side of an equation in $E$.
We are only interested in models of \BTA\ in which guarded recursive
specifications have unique solutions, such as the projective limit model
of \BTA\ presented in~\cite{BB03a}.
A thread that is the solution of a finite guarded recursive
specification over \BTA\ is called a \emph{finite-state} thread.

We extend \BTA\ with guarded recursion by adding constants for solutions
of guarded recursive specifications and axioms concerning these
additional constants.
For each guarded recursive specification $E$ and each $X \in \vars(E)$,
we add a constant of sort $\Thr$ standing for the unique solution of $E$
for $X$ to the constants of \BTA.
The constant standing for the unique solution of $E$ for $X$ is denoted
by $\rec{X}{E}$.
Moreover, we add the axioms for guarded recursion given in
Table~\ref{axioms-REC} to \BTA,%
\begin{table}[!t]
\caption{Axioms for guarded recursion}
\begin{eqntbl}
\begin{saxcol}
\rec{X}{E} = \rec{t_X}{E} & \mif X \!=\! t_X \in E       & \axiom{RDP}
\\
E \Implies X = \rec{X}{E} & \mif X \in \vars(E)          & \axiom{RSP}
\end{saxcol}
\end{eqntbl}
\label{axioms-REC}
\end{table}
where we write $\rec{t_X}{E}$ for $t_X$ with, for all $Y \in \vars(E)$,
all occurrences of $Y$ in $t_X$ replaced by $\rec{Y}{E}$.
In this table, $X$, $t_X$ and $E$ stand for an arbitrary variable of
sort $\Thr$, an arbitrary \BTA\ term of sort $\Thr$ and an arbitrary
guarded recursive specification over \BTA, respectively.
Side conditions are added to restrict the variables, terms and guarded
recursive specifications for which $X$, $t_X$ and $E$ stand.
%

We will write \BTA+\REC\ for \BTA\ extended with the constants for
solutions of guarded recursive specifications and axioms RDP and RSP.

\section{Interaction of Threads with Services}
\label{sect-TSU}

A thread may perform certain basic actions only for the sake of having
itself affected by some service.
When processing a basic action performed by a thread, a service affects
that thread by returning a reply value to the thread at completion of
the processing of the basic action.
In this section, we introduce the use mechanism, which is concerned with
this kind of interaction between threads and services.%
\footnote
{This version of the use mechanism was first introduced in~\cite{BM04c}.
 In later papers, it is also called thread-service composition.}

It is assumed that there is a fixed but arbitrary finite set $\Foci$ of
\emph{foci} and a fixed but arbitrary finite set $\Meth$ of
\emph{methods}.
Each focus plays the role of a name of a service provided by the
execution environment that can be requested to process a command.
Each method plays the role of a command proper.
For the set $\BAct$ of basic actions, we take the set
$\set{f.m \where f \in \Foci, m \in \Meth}$.
A thread performing a basic action $f.m$ is considered to make a request
to a service that is known to the thread under the name $f$ to process
command~$m$.

We introduce yet another sort: the sort $\Serv$ of \emph{services}.
However, we will not introduce constants and operators to build terms
of this sort.
$\Serv$ is considered to stand for the set of all services.
We identify services with functions
$\funct{H}{\neseqof{\Meth}}{\set{\True,\False,\Blocked}}$
that satisfy the following condition:
\begin{ldispl}
\Forall{\rho \in \neseqof{\Meth},m \in \Meth}
 {(H(\rho) = \Blocked \Implies H(\rho \concat \seq{m}) =
   \Blocked)}\;.\footnotemark
\end{ldispl}%
\footnotetext
{We write $\seqof{D}$ for the set of all finite sequences with elements
 from set $D$ and $\neseqof{D}$ for the set of all non-empty finite
 sequences with elements from set $D$.
 We use the following notation for finite sequences:
 $\emptyseq$ for the empty sequence,
 $\seq{d}$ for the sequence having $d$ as sole element,
 $\sigma \concat \sigma'$ for the concatenation of finite sequences
 $\sigma$ and $\sigma'$, and
 $\len(\sigma)$ for the length of finite sequence $\sigma$.}%
We write $\Services$ for the set of all services and $\Replies$ for
the set $\set{\True,\False,\Blocked}$.
Given a service $H$ and a method $m \in \Meth$,
the \emph{derived service} of $H$ after processing $m$,
written $\derive{m}H$, is defined by
$\derive{m}H(\rho) = H(\seq{m} \concat \rho)$.

A service $H$ can be understood as follows:
\begin{itemize}
\item
if $H(\seq{m}) \neq \Blocked$, then the request to process $m$ is
accepted by the service, the reply is $H(\seq{m})$, and the service
proceeds as $\derive{m}H$;
\item
if $H(\seq{m}) = \Blocked$, then the request to process $m$ is not
accepted by the service.
\end{itemize}

For each $f \in \Foci$, we introduce the \emph{use} operator
$\funct{\use{\ph}{f}{\ph}}{\Thr \cprod \Serv}{\Thr}$.
Intuitively, $\use{p}{f}{H}$ is the thread that results from processing
all basic actions performed by thread $p$ that are of the form $f.m$ by
service $H$.
When a basic action of the form $f.m$ performed by thread $p$ is
processed by service $H$, it is turned into the internal action $\Tau$
and postconditional composition is removed in favour of action prefixing
on the basis of the reply value produced.

The axioms for the use operators are given in Table~\ref{axioms-tsc}.%
\begin{table}[!t]
\caption{Axioms for use}
\begin{eqntbl}
\begin{saxcol}
\use{\Stop}{f}{H} = \Stop                            & & \axiom{TSU1} \\
\use{\DeadEnd}{f}{H} = \DeadEnd                      & & \axiom{TSU2} \\
\use{\Tau \bapf x}{f}{H} =
                          \Tau \bapf (\use{x}{f}{H}) & & \axiom{TSU3} \\
\use{(\pcc{x}{g.m}{y})}{f}{H} =
\pcc{(\use{x}{f}{H})}{g.m}{(\use{y}{f}{H})}
 & \mif \lnot f = g                                     & \axiom{TSU4} \\
\use{(\pcc{x}{f.m}{y})}{f}{H} =
\Tau \bapf (\use{x}{f}{\derive{m}H})
 & \mif H(\seq{m}) = \True                             & \axiom{TSU5} \\
\use{(\pcc{x}{f.m}{y})}{f}{H} =
\Tau \bapf (\use{y}{f}{\derive{m}H})
 & \mif H(\seq{m}) = \False                            & \axiom{TSU6} \\
\use{(\pcc{x}{f.m}{y})}{f}{H} = \DeadEnd
 & \mif H(\seq{m}) = \Blocked                          & \axiom{TSU7}
\end{saxcol}
\end{eqntbl}
\label{axioms-tsc}
\end{table}
In this table, $f$ and $g$ stand for arbitrary foci from $\Foci$ and $m$
stands for an arbitrary method from~$\Meth$.
Axioms TSU3 and TSU4 express that the action $\Tau$ and basic actions of
the form $g.m$ with $f \neq g$ are not processed.
Axioms TSU5 and TSU6 express that a thread is affected by a service as
described above when a basic action of the form $f.m$ performed by the
thread is processed by the service.
Axiom TSU7 expresses that deadlock takes place when a basic action to be
processed is not accepted.

Let $T$ stand for either \BTA\ or \BTA+\REC.
Then we will write $T$+\TSU\ for $T$, taking the set
$\set{f.m \where f \in \Foci, m \in \Meth}$ for $\BAct$, extended with
the use operators and the axioms from Table~\ref{axioms-tsc}.

\section{Poly-Threading}
\label{sect-TApt}

\BTA\ is a theory of the behaviours exhibited by sequential programs on
execution.
To cover the case where the programs have been split into fragments, we
extend \BTA\ in this section with the barest mechanism for sequencing of
threads that are taken for fragments.
The resulting theory is called \TApt.

Our general view on the way of achieving a joint behaviour of the
program fragments in a collection of program fragments between which
execution can be switched is as follows:
\begin{itemize}
\item
there can only be a single program fragment being executed at any stage;
\item
the program fragment in question may make any program fragment in the
collection the one being executed;
\item
making another program fragment the one being executed is effected by
executing a special instruction for switching over execution;
\item
any program fragment can be taken for the one being executed initially.
\end{itemize}
In order to obtain such a joint behaviour from the behaviours of the
program fragments on execution, a mechanism is needed by which the
start-up of another program fragment behaviour is effectuated whenever a
program fragment behaviour ends up with the intent to achieve such a
start-up.
In the setting of \BTA, taking threads for program fragment behaviours,
this requires the introduction of an additional sort, additional
constants and additional operators.
In doing so it is supposed that a collection of threads that corresponds
to a collection of program fragments between which execution can be
switched takes the form of a sequence, called a thread vector.

Like in \BTA+\TSU, it is assumed that there is a fixed but arbitrary
finite set $\Foci$ of foci and a fixed but arbitrary finite set $\Meth$
of methods.
It is also assumed that $\tls \in \Foci$ and $\init \in \Meth$.
The focus $\tls$ and the method $\init$ play special roles: $\tls$ is
the focus of a service that is initialized each time a thread is started
up by the mechanism referred to above and $\init$ is the initialization
method of that service.
For the set $\BAct$ of basic actions, we take again the set
$\set{f.m \where f \in \Foci, m \in \Meth}$.

\TApt\ has the sort $\Thr$ of \BTA\ and in addition the sort $\TV$ of
\emph{thread vectors}.
To build terms of sort $\Thr$, \TApt\ has the constants and operators
of \BTA\ and in addition the following additional constants and
operators:
\begin{itemize}
\item
for each $i \in \Nat$,
the \emph{internally controlled switch-over} constant
$\const{\Switch{i}}{\Thr}$;
\item
the \emph{externally controlled switch-over} constant
$\const{\Extern}{\Thr}$;
\item
the \emph{poly-threading} operator
$\funct{\sptop}{\Thr \cprod \TV}{\Thr}$;
\item
for each $k \in \Nat^+$,%
\footnote
{We write $\Nat^+$ for the set $\set{n \in \Nat \where n > 0}$.
 Throughout the paper, we use the convention that $k$ and $n$ stand for
 arbitrary elements of $\Nat^+$ and $\Nat$, respectively.}
the $k$-ary \emph{external choice} operator
$\funct{\choiceop{k}}
  {\underbrace{\Thr \cprod \cdots \cprod \Thr}_{k \;\mathrm{times}}}
  {\Thr}$.
\end{itemize}
To build terms of sort $\TV$, \TApt\ has the following constants and
operators:
\begin{itemize}
\item
the \emph{empty thread vector} constant $\const{\emptyseq}{\TV}$;
\item
the \emph{singleton thread vector} operator
$\funct{\seq{\ph}}{\Thr}{\TV}$;
\item
the \emph{thread vector concatenation} operator
$\funct{\ph \concat \ph}{\TV \cprod \TV}{\TV}$.
\end{itemize}
Throughout the paper, we assume that there are infinitely many variables
of sort $\TV$, including $\alpha,\beta,\gamma$.

In the context of the poly-threading operator $\sptop$, the constants
$\Switch{i}$ and $\Extern$ are alternatives for the constant $\Stop$
which produce additional effects.
Let $p$, $p_1$, \ldots, $p_n$ be closed terms of sort $\Thr$.
Then $\spt{p}{\seq{p_1} \concat \ldots \concat \seq{p_n}}$ first behaves as
$p$, but when $p$ terminates:
\begin{itemize}
\item
in the case where $p$ terminates with $\Stop$, it terminates;
\item
in the case where $p$ terminates with $\Switch{i}$:
\begin{itemize}
\item
it continues by behaving as
$\spt{p_i}{\seq{p_1} \concat \ldots \concat \seq{p_n}}$
if $1 \leq i \leq n$,
\item
it deadlocks otherwise;
\end{itemize}
\item
in the case where $p$ terminates with $\Extern$, it continues by
behaving as one of
$\spt{p_1}{\seq{p_1} \concat \ldots \concat \seq{p_n}}$, \ldots,
$\spt{p_n}{\seq{p_1} \concat \ldots \concat \seq{p_n}}$
or it deadlocks.
\end{itemize}
Moreover, the basic action $\tls.\init$ is performed between termination
and continuation.
In the case where $p$ terminates with $\Extern$, the choice between the
alternatives is made externally.
Nothing is stipulated about the effect that the constants $\Switch{i}$
and $\Extern$ produce in the case where they occur outside the context
of the poly-threading operator.

The poly-threading operator concerns sequencing of threads.
A thread selection involved in sequencing of threads is called an
\emph{autonomous thread selection} if the selection is made by the
terminating thread.
Otherwise, it is called a \emph{non-autonomous thread selection}.
The constants $\Switch{i}$ are meant for autonomous thread selections
and the constant $\Extern$ is meant for non-autonomous thread
selections.
We remark that non-autonomous thread selections are immaterial to the
joint behaviours of program fragments referred to above.

In the case of a non-autonomous thread selection, it comes to an
external choice between a number of threads.
The external choice operator $\choiceop{k}$ concerns external choice
between $k$ threads.
Let $p_1$, \ldots, $p_k$ be closed terms of sort $\Thr$.
Then $\choice{k}{p_1,\ldots,p_k}$ behaves as the outcome of an external
choice between $p_1$, \ldots, $p_k$ and $\DeadEnd$.

\TApt\ has the axioms of \BTA\ and in addition the axioms given in
Table~\ref{axioms-spt}.%
%
\begin{table}[!t]
\caption{Axioms for poly-threading}
\begin{eqntbl}
\begin{saxcol}
\spt{\Stop}{\alpha} = \Stop                          & & \axiom{SPT1} \\
\spt{\DeadEnd}{\alpha} = \DeadEnd                    & & \axiom{SPT2} \\
\spt{\pcc{x}{a}{y}}{\alpha} =
\pcc{\spt{x}{\alpha}}{a}{\spt{y}{\alpha}}            & & \axiom{SPT3} \\
\spt{\Switch{i}}{\seq{x_1} \concat \ldots \concat \seq{x_n}} =
\tls.\init \bapf \spt{x_i}{\seq{x_1} \concat \ldots \concat \seq{x_n}}
                               & \mif 1 \leq i \leq n  & \axiom{SPT4} \\
\spt{\Switch{i}}{\seq{x_1} \concat \ldots \concat \seq{x_n}} =
\DeadEnd                       & \mif i = 0 \lor i > n & \axiom{SPT5} \\
\spt{\Extern}{\seq{x_1} \concat \ldots \concat \seq{x_k}} =
{} \\
\multicolumn{2}{@{}l@{\;\;}}
{\quad
 \choice{k}
  {\tls.\init \bapf
   \spt{x_1}{\seq{x_1} \concat \ldots \concat \seq{x_k}},\ldots,
   \tls.\init \bapf
   \spt{x_k}{\seq{x_1} \concat \ldots \concat \seq{x_k}}}}
                                                       & \axiom{SPT6} \\
\spt{\Extern}{\emptyseq} = \DeadEnd                  & & \axiom{SPT7}
\end{saxcol}
\end{eqntbl}
\label{axioms-spt}
\end{table}
In this table, $a$ stands for an arbitrary action from $\BActTau$.
The additional axioms express that threads are sequenced by
poly-threading as described above.
There are no axioms for the external choice operators because their basic
properties cannot be expressed as equations or conditional equations.
For each $k \in \Nat^+$, the basic properties of $\choiceop{k}$ are
expressed by the following disjunction of equations:
$\OR{i \in [1,k]} \choice{k}{x_1,\ldots,x_k} = x_i \lor
 \choice{k}{x_1,\ldots,x_k} = \DeadEnd$.%
\footnote
{We use the notation $[n,m]$, where $n,m \in \Nat$, for the set
 $\set{i \in \Nat \where n \leq i \leq m}$.}

To be fully precise, we should give axioms concerning the constants and
operators to build terms of the sort $\TV$ as well.
We refrain from doing so because the constants and operators concerned
are the usual ones for sequences.
Similar remarks apply to the sort $\DTV$ introduced later and will not
be repeated.

Guarded recursion can be added to \TApt\ as it is added to \BTA\ in
Section~\ref{sect-BTA}.
We will write \TApt+\REC\ for \TApt\ extended with the constants for
solutions of guarded recursive specifications and axioms \RDP\ and \RSP.

The use mechanism can be added to \TApt\ as it is added to \BTA\ in
Section~\ref{sect-TSU}.
Let $T$ stand for either \TApt\ or \TApt+\REC.
Then we will write $T$+\TSU\ for $T$ extended with
the use operators and the axioms from Table~\ref{axioms-tsc}.

\section{Internalization of Non-Autonomous Thread Selection}
\label{sect-internalization}

In the case of non-autonomous thread selection, the selection of a
thread is made externally.
In this section, we show how non-autonomous thread selection can be
internalized.
For that purpose, we first extend \TApt\ with postconditional
switching.
Postconditional switching is like postconditional composition, but
covers the case where services processing basic actions produce reply
values from the set $\Nat$ instead of reply values from the set
$\set{\True,\False}$.
Postconditional switching is convenient when internalizing
non-autonomous thread selection, but it is not necessary.

For each $a \in \BActTau$ and $k \in \Nat^+$, we introduce
the $k$-ary \emph{postconditional switch} operator
$\funct{\pcsop{k}{a}}
  {\underbrace{\Thr \cprod \cdots \cprod \Thr}_{k \;\mathrm{times}}}
  {\Thr}$.
Let $p_1$, \ldots, $p_k$ be closed terms of sort $\Thr$.
Then $\pcs{k}{a}{p_1,\ldots,p_k}$ will first perform action $a$, and
then proceed as $p_1$ if the processing of $a$ leads to the reply~$1$,
\ldots, $p_k$ if the processing of $a$ leads to the reply $k$.

The axioms for the postconditional switching operators are given in
Table~\ref{axioms-npcc}.%
\begin{table}[!t]
\caption{Axioms for postconditional switching}
\begin{eqntbl}
\begin{axcol}
\pcs{k}{\Tau}{x_1,\ldots,x_k} = \pcs{k}{\Tau}{x_1,\ldots,x_1}       & \\
\spt{\pcs{k}{a}{x_1,\ldots,x_k}}{\alpha} =
\pcs{k}{a}{\spt{x_1}{\alpha},\ldots,\spt{x_k}{\alpha}}              & \\
\use{\pcs{k}{\Tau}{x_1,\ldots,x_k}}{f}{H} =
\pcs{k}{\Tau}{\use{x_1}{f}{H},\ldots,\use{x_k}{f}{H}}               & \\
\use{\pcs{k}{g.m}{x_1,\ldots,x_k}}{f}{H} =
\pcs{k}{g.m}{\use{x_1}{f}{H},\ldots,\use{x_k}{f}{H}}
                               & \mif \lnot f = g                      \\
\use{\pcs{k}{f.m}{x_1,\ldots,x_k}}{f}{H} =
\Tau \bapf (\use{x_i}{f}{\derive{m}H})
                               & \mif H(\seq{m}) = i \land i \in [1,k] \\
\use{\pcs{k}{f.m}{x_1,\ldots,x_k}}{f}{H} = \DeadEnd
                               & \mif \lnot H(\seq{m}) \in [1,k]
\end{axcol}
\end{eqntbl}
\label{axioms-npcc}
\end{table}
In this table, $a$ stands for an arbitrary action from $\BActTau$, $f$
and $g$ stand for arbitrary foci from $\Foci$, and $m$ stands for an
arbitrary method from~$\Meth$.

We proceed with the internalization of non-autonomous thread selections.
Let $p$, $p_1$, \ldots, $p_k$ be closed terms of sort $\Thr$.
The  idea is that
$\spt{p}{\seq{p_1} \concat \ldots \concat \seq{p_k}}$
can be internalized by:
\begin{itemize}
\item
replacing in $\spt{p}{\seq{p_1} \concat \ldots \concat \seq{p_k}}$ all
occurrences of $\Extern$ by $\Switch{k{+}1}$;
\item
appending a thread that can make the thread selections to the thread
vector.
\end{itemize}
Simultaneous with the replacement of all occurrences of $\Extern$ by
$\Switch{k{+}1}$, all occurrences of $\Switch{k{+}1}$ must be replaced
by $\DeadEnd$ to prevent inadvertent selections of the appended thread.
When making a thread selection, the appended thread has to request the
external environment to give the position of the thread that it would
have selected itself.
We make the simplifying assumption that the external environment can be
viewed as a service.

Let $p$, $p_1$, \ldots, $p_k$ be closed terms of sort $\Thr$.
\sloppy
Then the \emph{internalization} of
$\spt{p}{\seq{p_1} \concat \ldots \concat \seq{p_k}}$ is
\begin{ldispl}
\spt{\rho(p)}
 {\seq{\rho(p_1)} \concat \ldots \concat
  \seq{\rho(p_k)} \concat
  \seq{\pcs{k}{\ext.\sel}{\Switch{1},\ldots,\Switch{k}}}}\;,
\end{ldispl}%
where $\rho(p')$ is $p'$ with simultaneously all occurrences of
$\Extern$ replaced by $\Switch{k{+}1}$ and all occurrences of
$\Switch{k{+}1}$ replaced by $\DeadEnd$.
Here, it is assumed that $\ext \in \Foci$ and $\sel \in \Meth$.

Postconditional switching is not really necessary for internalization.
Let $k_1 = \lfloor k/2 \rfloor$, $k_2 = \lfloor k_1/2 \rfloor$,
$k_3 = \lfloor (k - k_1)/2 \rfloor$, \ldots\ .
Using postconditional composition, first a selection can be made
between $\set{p_1,\ldots,p_{k_1}}$ and $\set{p_{k_1+1},\ldots,p_k}$,
next a selection can be made between $\set{p_1,\ldots,p_{k_2}}$ and
$\set{p_{k_2+1},\ldots,p_{k_1}}$ or between
$\set{p_{k_1+1},\ldots,p_{k_3}}$ and $\set{p_{k_3+1},\ldots,p_k}$,
depending on the outcome of the previous selection, etcetera.
In this way, the number of actions performed to select a thread is
between $\lfloor ^2\log(k) \rfloor$ and $\lceil ^2\log(k) \rceil$.

\section{\ACP\ with Conditions}
\label{sect-ACPc}

In Section~\ref{sect-thr-serv-proc}, we will investigate the
connections of threads and services with the processes considered in
\ACP-style process algebras.
We will focus on \ACPc, an extension of \ACP\ with conditions introduced
in~\cite{BM05a}.
In this section, we shortly review \ACPc.

\ACPc\ is an extension of \ACP\ with conditional expressions in which
the conditions are taken from a Boolean algebra.
\ACPc\ has two sorts:
(i)  the sort $\Proc$ of \emph{processes},
(ii) the sort $\Cond$ of \emph{conditions}.
In \ACPc, it is assumed that the following has been given:
a fixed but arbitrary set $\Act$ (of actions), with
$\dead \not\in \Act$,
a fixed but arbitrary set $\ACond$ (of atomic conditions), and
a fixed but arbitrary commutative and associative function
$\funct{\commm}{\Act \union \set{\dead} \cprod \Act \union \set{\dead}}
       {\Act \union \set{\dead}}$ such that
$\dead \commm a = \dead$ for all $a \in \Act \union \set{\dead}$.
The function $\commm$ is regarded to give the result of synchronously
performing any two actions for which this is possible, and to be $\dead$
otherwise.
Henceforth, we write $\Actd$ for $\Act \union \set{\dead}$.

Let $p$ and $q$ be closed terms of sort $\Proc$, $\zeta$ and $\xi$
be closed term of sort $\Cond$, $a \in \Act$, $H \subseteq \Act$, and
$\eta \in \ACond$.
Intuitively, the constants and operators to build terms of sort $\Proc$
that will be used to define the processes to which threads and services
correspond can be explained as follows:
\begin{itemize}
\item
$\dead$ can neither perform an action nor terminate successfully;
\item
$a$ first performs action $a$ unconditionally and then terminates
successfully;
\item
$p \altc q$ behaves either as $p$ or as $q$, but not both;
\item
$p \seqc q$ first behaves as $p$, but when $p$ terminates successfully
it continues as $q$;
\item
$\zeta \gc p$ behaves as $p$ under condition $\zeta$;
\item
$p \parc q$ behaves as the process that proceeds with $p$ and $q$ in
parallel;
\item
$\encap{H}(p)$ behaves the same as $p$, except that actions from $H$ are
blocked.
\end{itemize}
Intuitively, the constants and operators to build terms of sort $\Cond$
that will be used to define the processes to which threads and services
correspond can be explained as follows:
\begin{itemize}
\item
$\eta$ is an atomic condition;
\item
$\bot$ is a condition that never holds;
\item
$\top$ is a condition that always holds;
\item
$\bcompl \zeta$ is the opposite of $\zeta$;
\item
$\zeta \join \xi$ is either $\zeta$ or $\xi$;
\item
$\zeta \meet \xi$ is both $\zeta$ and $\xi$.
\end{itemize}
The remaining operators of \ACPc\ are of an auxiliary nature.
They are needed to axiomatize \ACPc.
The axioms of \ACPc\ are given in~\cite{BM05a}.

We write $\vAltc{i \in \mathcal{I}} p_i$,
where $\mathcal{I} = \set{i_1,\ldots,i_n}$
and $p_{i_1},\ldots,p_{i_n}$ are terms of sort $\Proc$,
for $p_{i_1} \altc \ldots \altc p_{i_n}$.
The convention is that $\vAltc{i \in \mathcal{I}} p_i$ stands for
$\dead$ if $\mathcal{I} = \emptyset$.
We use the notation $\cond{p}{\zeta}{q}$,
where $p$ and $q$ are terms of sort $\Proc$
and $\zeta$ is a term of sort $\Cond$,
for $\zeta \gc p \altc \bcompl{\zeta} \gc q$.

A process is considered definable over \ACPc\ if there exists a guarded
recursive specification over \ACPc\ that has that process as its
solution.

A \emph{recursive specification} over \ACPc\ is a set of recursion
equations $E = \set{X = t_X \where\linebreak[2] X \in V}$, where $V$ is
a set of variables and each $t_X$ is a term of sort $\Proc$ that only
contains variables from $V$.
Let $t$ be a term of sort $\Proc$ containing a variable $X$.
Then an occurrence of $X$ in $t$ is \emph{guarded} if $t$ has a subterm
of the form $a \seqc t'$ where $a \in \Act$ and $t'$ is a term
containing this occurrence of $X$.
Let $E$ be a recursive specification over \ACPc.
Then $E$ is a \emph{guarded recursive specification} if, in each
equation $X = t_X \in E$, all occurrences of variables in $t_X$ are
guarded or $t_X$ can be rewritten to such a term using the axioms of
\ACPc\ in either direction and/or the equations in $E$ except the
equation $X = t_X$ from left to right.
We only consider models of \ACPc\ in which guarded recursive
specifications have unique solutions, such as the full splitting
bisimulation models of \ACPc\ presented in~\cite{BM05a}.

For each guarded recursive specification $E$ and each variable
$X$ that occurs as the left-hand side of an equation in $E$, we
introduce a constant of sort $\Proc$ standing for the unique solution of
$E$ for $X$.
This constant is denoted by $\rec{X}{E}$.
The axioms for guarded recursion are also given in~\cite{BM05a}.

In order to express the use operators, we need an extension of \ACPc\
with action renaming operators.
Intuitively, the action renaming operator $\aren{f}$, where
$\funct{f}{\Act}{\Act}$, can be explained as follows: $\aren{f}(p)$
behaves as $p$ with each action replaced according to $f$.
The axioms for action renaming are the ones given in~\cite{Fok00} and in
addition the equation $\rho_f(\phi \gc x) = \phi \gc \rho_f(x)$.
We write $\aren{a' \mapsto a''}$ for the renaming operator $\aren{g}$
with $g$ defined by $g(a') = a''$ and $g(a) = a$ if $a \neq a'$.

In order to explain the connection of threads and services with \ACPc\
fully, we need an extension of \ACPc\ with the condition evaluation
operators $\ceval{h}$ introduced in~\cite{BM05a}.
Intuitively, the condition evaluation operator $\ceval{h}$, where $h$ is
a function on conditions that is preserved by $\bot$, $\top$, $\bcompl$,
$\join$ and $\meet$, can be explained as follows: $\ceval{h}(p)$ behaves
as $p$ with each condition replaced according to $h$.
The important point is that, if $h(\zeta) \in \set{\bot,\top}$,
all subterms of the form $\zeta \gc q$ can be eliminated.
The axioms for condition evaluation are also given in~\cite{BM05a}.

\section{Threads, Services and \ACPc-Definable Processes}
\label{sect-thr-serv-proc}

In this section, we relate threads and services as considered in
\TApt+\REC+\TSU\ to processes that are definable over \ACPc\ with
action renaming.

For that purpose, $\Act$, $\commm$ and $\ACond$ are taken as follows:
\begin{ldispl}
\begin{aeqns}
\Act  & = &
\set{\snd_f(d) \where f \in \Foci, d \in \Meth \union \Replies} \union
\set{\rcv_f(d) \where f \in \Foci, d \in \Meth \union \Replies}
\\ & {} \union {} &
\set{\snd_\ext(n) \where n \in \Nat} \union
\set{\rcv_\ext(n) \where n \in \Nat} \union
\set{\stp,\ol{\stp},\stp^*,\iact}
\\ & {} \union {} &
\set{\snd_\serv(r) \where r \in \Replies} \union
\set{\rcv_\serv(m) \where m \in \Meth}\;;
\end{aeqns}
\end{ldispl}%
for all $a \in \Act$, $f \in \Foci$, $d \in \Meth \union \Replies$,
$m \in \Meth$, $r \in \Replies$ and $n \in \Nat$:
\begin{ldispl}
\begin{aeqns}
\snd_f(d) \commm \rcv_f(d) = \iact \;,
\\
\snd_f(d) \commm a = \dead & & \mif a \neq \rcv_f(d)\;,
\\
a \commm \rcv_f(d) = \dead & & \mif a \neq \snd_f(d)\;,
\bigeqnsep
\snd_\ext(n) \commm \rcv_\ext(n) = \iact \;,
\\
\snd_\ext(n) \commm a = \dead & & \mif a \neq \rcv_\ext(n)\;,
\\
a \commm \rcv_\ext(n) = \dead & & \mif a \neq \snd_\ext(n)\;,
\end{aeqns}
\qquad\;
\begin{aeqns}
\stp \commm \ol{\stp} = \stp^*\;,
\\
\stp \commm a = \dead      & & \mif a \neq \ol{\stp}\;,
\\
a \commm \ol{\stp} = \dead & & \mif a \neq \stp\;,
\eqnsep
\iact \commm a = \dead\;,
\eqnsep
\snd_\serv(r) \commm a = \dead\;,
\\
a \commm \rcv_\serv(m) = \dead\;;
\end{aeqns}
\end{ldispl}%
and
\begin{ldispl}
\begin{aeqns}
\ACond & = &
\set{H(\seq{m}) = r \where
     H \in \Services, m \in \Meth, r \in \Replies}\;.
\end{aeqns}
\end{ldispl}%

For each $f \in \Foci$, the set $A_f \subseteq \Act$ and the function
$\funct{R_f}{\Act}{\Act}$ are defined as follows:
\begin{ldispl}
A_f =
\set{\snd_f(d) \where d \in \Meth \union \Replies} \union
\set{\rcv_f(d) \where d \in \Meth \union \Replies}\;;
\end{ldispl}%
for all $a \in \Act$, $m \in \Meth$ and $r \in \Replies$:
\begin{ldispl}
\begin{aeqns}
R_f(\snd_\serv(r)) = \snd_f(r)\;, \\
R_f(\rcv_\serv(m)) = \rcv_f(m)\;, \\
R_f(a)             = a
 & & \mif \AND{r' \in \Replies}{a \neq \snd_\serv(r')} \land
          \AND{m' \in \Meth}{a \neq \rcv_\serv(m')}\;.
\end{aeqns}
\end{ldispl}%
The sets $A_f$ and the functions $R_f$ are used below to express the use
operators in terms of the operators of \ACPc\ with action renaming.

For convenience, we introduce a special notation.
Let $\alpha$ be a term of sort $\TV$,
let $p_1,\ldots,p_n$ be terms of sort $\Thr$ such that
$\alpha = \seq{p_1} \concat \ldots \concat \seq{p_n}$, and
let $i \in [1,n]$.
Then we write $\alpha[i]$ for $p_i$.

We proceed with relating threads and services as considered in
\TApt+\REC+\linebreak[2]\TSU\ to processes definable over \ACPc\ with
action renaming.
The underlying idea is that threads and services can be viewed as
processes that are definable over \ACPc\ with action renaming.
We define those processes by means of a translation function $\Mt{\ph}$
from the set of all terms of sort $\Thr$ to the set of all function from
the set of all terms of sort $\TV$ to the set of all terms
of sort $\Proc$ and a translation function $\Ms{\ph}$ from the set of
all services to the set of all terms of sort $\Proc$.
These translation functions are defined inductively by the equations
given in Table~\ref{def-translation},%
\begin{table}[!t]
\caption{Definition of translation functions}
\begin{eqntbl}
\begin{seqncol}
\Mt{X}(\alpha) = X
\\
\Mt{\Stop}(\alpha) = \stp
\\
\Mt{\DeadEnd}(\alpha) = \iact \seqc \dead
\\
\Mt{\pcc{t_1}{\Tau}{t_2}}(\alpha) =
\iact \seqc \iact \seqc \Mt{t_1}(\alpha)
\\
\Mt{\pcc{t_1}{f.m}{t_2}}(\alpha) =
\snd_f(m) \seqc
(\rcv_f(\True) \seqc \Mt{t_1}(\alpha) \altc
 \rcv_f(\False) \seqc \Mt{t_2}(\alpha))
\\
\Mt{\Switch{i}}(\alpha) = \tls.\init \seqc \Mt{\alpha[i]}(\alpha)
 & \mif 1 \leq i \leq \len(\alpha)
\\
\Mt{\Switch{i}}(\alpha) = \iact \seqc \dead
 & \mif i = 0 \lor i > \len(\alpha)
\\
\Mt{\Extern}(\alpha) =
\vAltc{i \in [1,\len(\alpha)]}
 \rcv_\ext(i) \seqc \tls.\init \seqc \Mt{\alpha[i]}(\alpha) \altc
 \iact \seqc \dead
\\
\Mt{\spt{t}{\alpha'}}(\alpha) = \Mt{t}(\alpha')
\\
\Mt{\choice{k}{t_1,\ldots,t_k}}(\alpha) =
\vAltc{i \in [1,k]} \rcv_\ext(i) \seqc \Mt{t_i}(\alpha) \altc
\iact \seqc \dead
\\
\Mt{\rec{X}{E}}(\alpha) =
\rec{X}{\set{X = \Mt{t}(\alpha) \where X = t \,\in\, E}}
\\
\Mt{\use{t}{f}{H}}(\alpha) =
\aren{\stp^* \mapsto \stp}
 (\encap{\set{\stp,\ol{\stp}}}
   (\encap{A_f}(\Mt{t}(\alpha) \parc \aren{R_f}(\Ms{H}))))
\eqnsep
\Ms{H} = \rec{X_H}{\set{X_{H'} = t_{H'} \where H' \in \Services}}
\end{seqncol}
\end{eqntbl}
\label{def-translation}
\end{table}
where we write in the last equation $t_{H'}$ for the term
\begin{ldispl}
{} \hspace*{-1em}
\Altc{m \in \Meth}
  \rcv_\serv(m) \seqc \snd_\serv({H'}(\seq{m})) \seqc
  (\cond{X_{\derive{m} {H'}}}
        {{H'}(\seq{m}) \!=\! \True \join {H'}(\seq{m}) \!=\! \False}
        {X_{H'}})
\\ {} \hspace*{-1em} \altc
  \ol{\stp}\;.
\end{ldispl}%
Let $p$ be a closed term of sort $\Thr$.
Then the \emph{process algebraic interpretation} of $p$ is
$\Mt{p}(\emptyseq)$.
Henceforth, we write $\Mt{p}$ for $\Mt{p}(\emptyseq)$.

Notice that \ACP\ is sufficient for the translation of terms of sort
$\Thr$: no conditional expressions occur in the translations.
For the translation of services, we need the full power of \ACPc.

The translations given above preserve the axioms of \TApt+\REC+\TSU.
Roughly speaking, this means that the translations of these axioms are
derivable from the axioms of \ACPc\ with action renaming and guarded
recursion.
Before we make this fully precise, we have a closer look at the axioms
of \TApt+\REC+\TSU.

A proper axiom is an equation or a conditional equation.
In Tables~\ref{axioms-BTA}--\ref{axioms-spt}, we do not only find proper
axioms.
In addition to proper axioms, we find:
(i)~axiom schemas without side conditions;
(ii)~axiom schemas with syntactic side conditions;
(iii)~axiom schemas with semantic side conditions.
The axioms of \TApt+\REC+\TSU\ are obtained by replacing each axiom
schema by all its instances.
Owing to the presence of axiom schemas with semantic side conditions,
the axioms of \TApt+\REC+\TSU\ include proper axioms and axioms with
semantic side conditions.
Therefore, semantic side conditions take part in the translation of the
axioms as well.
The instances of TSU5, TSU6, and TSU7 are the only axioms of
\TApt+\REC+\TSU\ with semantic side conditions.
These semantic side conditions, being of the form $H(\seq{m}) = r$, are
looked upon as elements of $\ACond$.

Consider the set that consists of:
\begin{itemize}
\item
all equations $t_1 = t_2$, where $t_1$ and $t_2$ are terms of sort
$\Thr$;
\item
all conditional equations $E \Implies t_1 = t_2$, where $t_1$ and $t_2$
are terms of sort $\Thr$ and $E$ is a set of equations $t'_1 = t'_2$
where $t'_1$ and $t'_2$ are terms of sort $\Thr$;
\item
all expressions $t_1 = t_2 \smif \phi$, where $t_1$ and $t_2$ are terms
of sort $\Thr$ and $\phi \in \ACond$.
\end{itemize}
We define a translation function $\Mt{\ph}$ from this set to the set of
all equations of \ACPc\ with action renaming and guarded recursion as
follows:
\begin{ldispl}
\Mt{t_1 = t_2} \;\;=\;\; \Mt{t_1} = \Mt{t_2}\;,
\\
\Mt{E \Implies t_1 = t_2} \;\;=\;\;
\set{\Mt{t'_1} = \Mt{t'_2} \where t'_1 = t'_2 \,\in\, E} \Implies
\Mt{t_1} = \Mt{t_2}\;,
\\
\Mt{t_1 = t_2 \smif \phi} \;\;=\;\;
\ceval{h_{\Phi \union \set{\phi}}}(\Mt{t_1}) =
\ceval{h_{\Phi \union \set{\phi}}}(\Mt{t_2})\;,
\end{ldispl}%
where
\begin{ldispl}
\begin{aeqns}
\Phi & = &
\set{\AND{r \in \Replies}
      \lnot (H(\seq{m}) = r \land
            \OR{r' \in \Replies \diff \set{r}} H(\seq{m}) = r') \where
     H \!\in\! \Services, m \!\in\! \Meth}\;.
\end{aeqns}
\end{ldispl}%
Here $h_\Psi$ is a function on conditions of \ACPc\ that preserves
$\bot$, $\top$, $\bcompl$, $\join$ and $\meet$ and satisfies
$h_\Psi(\zeta) = \top$ iff $\zeta$ corresponds to a proposition
derivable from $\Psi$ and $h_\Psi(\zeta) = \bot$ iff $\bcompl \zeta$
corresponds to a proposition derivable from $\Psi$.%
\footnote
{Here we use ``corresponds to'' for the wordy ``is isomorphic to the
 equivalence class with respect to logical equivalence of''
 (see also~\cite{BM05a}).}
\begin{theorem}[Preservation]
Let $\nm{ax}$ be an axiom of \textup{\TApt+\REC+\TSU}.
Then $\Mt{\nm{ax}}$ is derivable from the axioms of \ACPc\ with action
renaming and guarded recursion.
\end{theorem}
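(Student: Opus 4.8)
The plan is to prove the statement by exhausting the axioms of \TApt+\REC+\TSU, that is, all instances of the schemas in Tables~\ref{axioms-BTA}, \ref{axioms-REC}, \ref{axioms-tsc} and~\ref{axioms-spt}. For each axiom I would compute the translations of its two sides and exhibit a derivation of their equality from the axioms of \ACPc\ with action renaming and guarded recursion. For a proper equation this means deriving $\Mt{t_1} = \Mt{t_2}$; for an axiom with a semantic side condition $\phi$ it means deriving $\ceval{h_{\Phi \union \set{\phi}}}(\Mt{t_1}) = \ceval{h_{\Phi \union \set{\phi}}}(\Mt{t_2})$. Because the clause $\Mt{\spt{t}{\alpha'}}(\alpha) = \Mt{t}(\alpha')$ discards the outer thread-vector argument, it does no harm to carry out all computations at $\alpha = \emptyseq$, as prescribed by the convention $\Mt{p} = \Mt{p}(\emptyseq)$.

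The structural and recursion axioms are dispatched quickly. For T1 the two translations are literally identical, since $\Mt{\pcc{t_1}{\Tau}{t_2}}(\alpha) = \iact \seqc \iact \seqc \Mt{t_1}(\alpha)$ does not mention $t_2$; SPT1--SPT3 likewise reduce to syntactic identity after applying the translation clauses, SPT3 needing a split on whether $a = \Tau$ or $a \in \BAct$. In SPT4 and SPT6 the two sides coincide once the prefix $\tls.\init \seqc{}$ of Table~\ref{def-translation} is read as the translation of the action prefix $\tls.\init \bapf{}$; SPT5 follows by unfolding the $i = 0 \lor i > \len(\alpha)$ clause of $\Mt{\ph}$, and SPT7 by the same together with $\dead \altc x = x$, which collapses the empty alternative of $\Mt{\Extern}(\emptyseq)$ to $\iact \seqc \dead = \Mt{\DeadEnd}(\emptyseq)$. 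For RDP and RSP I would first prove, by induction on term structure, that $\Mt{\ph}$ commutes with the substitution of recursion solutions and that the translation of a guarded recursive specification over \TApt\ is again guarded --- each right-hand side is $\Stop$, $\DeadEnd$, a switch-over constant, or $\pcc{t}{a}{t'}$, all of which translate to terms whose variables are guarded by a leading action. RDP and RSP then reduce to the homonymous \ACPc\ axioms applied to the translated specification.

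The substantive cases are the use axioms, all of which I would treat by one computation scheme. Writing $B = A_f \union \set{\stp,\ol{\stp}}$ and using $\encap{\set{\stp,\ol{\stp}}}(\encap{A_f}(p)) = \encap{B}(p)$, I would expand $\Mt{t} \parc \aren{R_f}(\Ms{H})$ under $\encap{B}$ and $\aren{\stp^* \mapsto \stp}$ with the expansion theorem, keeping the thread side symbolic and unfolding $\Ms{H}$ one step by RDP. The decisive observation is that every solo action of the renamed service lies in $B$ --- its $\rcv_f(m)$ summands are in $A_f$ and its termination summand is $\ol{\stp}$ --- so the service can contribute only by synchronizing. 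Thus for TSU1, where $\Mt{\Stop} = \stp$ and $\stp \in B$ as well, the handshake $\stp \commm \ol{\stp} = \stp^*$ is forced and then renamed to $\stp$; for TSU2, where $\Mt{\DeadEnd} = \iact \seqc \dead$ and $\iact$ synchronizes with nothing, the $\iact$ happens alone and the service is blocked thereafter, so $\Mt{\DeadEnd}$ passes through unchanged; and for TSU4 the actions $\snd_g(m)$, $\rcv_g(\True)$, $\rcv_g(\False)$ with $g \neq f$ lie outside $B$ and cannot synchronize with the service, so they are carried along, reproducing the postconditional structure on the right.

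I expect the main obstacle to lie in TSU5--TSU7, where the reply of the service and the condition-evaluation operator come into play. In each of these the thread's $\snd_f(m)$ lies in $A_f$, so the first step must be the handshake $\snd_f(m) \commm \rcv_f(m) = \iact$, after which the service continues as $\snd_f(H(\seq{m})) \seqc \cond{X_{\derive{m}H}}{H(\seq{m}) = \True \join H(\seq{m}) = \False}{X_H}$. For TSU5 and TSU6 the reply $\snd_f(r)$ with $r = H(\seq{m}) \in \set{\True,\False}$ synchronizes with the matching $\rcv_f(r)$ of the thread (a second $\iact$), selecting the correct branch; and under $h_{\Phi \union \set{\phi}}$ the guard is resolved, because the functionality constraints in $\Phi$ together with $\phi$ pin $H(\seq{m})$ to $r$, so the guard evaluates to $\top$ and $\cond{X_{\derive{m}H}}{\cdot}{X_H}$ collapses to $X_{\derive{m}H}$. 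Two expansion steps then leave $\aren{\stp^* \mapsto \stp}(\encap{B}(X \parc \aren{R_f}(\Ms{\derive{m}H})))$, which is exactly $\Mt{\use{x}{f}{\derive{m}H}}$, whence the right-hand side $\Tau \bapf \use{x}{f}{\derive{m}H}$. For TSU7, with $\phi$ equal to $H(\seq{m}) = \Blocked$, the reply $\snd_f(\Blocked)$ is itself in $B$ and cannot synchronize with $\rcv_f(\True)$ or $\rcv_f(\False)$, so the process deadlocks right after the first $\iact$, giving $\iact \seqc \dead = \Mt{\DeadEnd}$ before the guard is ever reached. The delicate part is to justify, for all three, that $\ceval{h_{\Phi \union \set{\phi}}}$ may be pushed through $\parc$, $\encap{B}$ and $\aren{\ph}$ so that the resolved guard can be eliminated inside the expansion; this is the one place where the strength of \ACPc\ beyond plain \ACP, through its condition evaluation operators, is genuinely needed.
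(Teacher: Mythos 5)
Your proposal is correct and follows essentially the same route as the paper, whose proof consists only of the remark that the verification is straightforward, with the computation for TSU5 outlined in the cited companion paper and the other axioms handled similarly; your axiom-by-axiom case analysis, with the forced-handshake expansion for the use axioms and the resolution of the service guard by $\ceval{h_{\Phi \union \set{\phi}}}$, is precisely that argument worked out in detail. In particular, your treatment of TSU5--TSU7 matches the one computation the paper singles out, and your flagged delicate point (distributing condition evaluation through $\parc$, $\encap{H}$ and renaming) is exactly what the appeal to the axioms of \ACPc\ with condition evaluation is for.
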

\begin{proof}
The proof is straightforward.
In~\cite{BM05c}, we outline the proof for axiom TSU5.
The other axioms are proved in a similar way.
\qed
\end{proof}

\section{Execution Architectures for Fragmented Programs}
\label{sect-exe-arch}

An analytic execution architecture in the sense of~\cite{BP04a} is a
model of a hypothetical execution environment for sequential programs
that is designed for the purpose of explaining how a program may be
executed.
An analytic execution architecture makes explicit the interaction of a
program with the components of its execution environment.
The notion of analytic execution architecture defined in~\cite{BP04a} is
suited to sequential programs that have not been split into fragments.
In this section, we discuss analytic execution architectures suited to
sequential programs that have been split into fragments.

The notion of analytic execution architecture from~\cite{BP04a} is
defined in the setting of program algebra.
In~\cite{BL02a}, a thread extraction operation $\extr{\ph}$ is defined
which gives, for each program considered in program algebra, the thread
that is taken for the behaviour exhibited by the program on execution.
In the case of programs that have been split into fragments, additional
instructions for switching over execution to another program fragment
are needed.
We assume that a collection of program fragments between which execution
can be switched takes the form of a sequence, called an program fragment
vector, and that there is an additional instruction $\swo{i}$ for each
$i \in \Nat$.
Switching over execution to the $i$-th program fragment in the program
fragment vector is effected by executing the instruction $\swo{i}$.
If $i$ equals $0$ or $i$ is greater than the length of the program
fragment vector, execution of $\swo{i}$ results in deadlock.
We extend thread extraction as follows:
\begin{ldispl}
\extr{\swo{i}} = \Switch{i}\;, \qquad
\extr{\swo{i} \conc x} = \Switch{i}\;.
\end{ldispl}%

An analytic execution architecture for programs that have been split
into fragments consists of a component containing a program fragment, a
component containing a program fragment vector and a number of service
components.
The component containing a program fragment is capable of processing
instructions one at a time, issuing appropriate requests to service
components and awaiting replies from service components as described
in~\cite{BP04a} in so far as instructions other than switch-over
instructions are concerned.
This implies that, for each service component, there is a channel for
communication between the program fragment component and that service
component and that foci are used as names of those channels.
In the case of a switch-over instruction, the component containing a
program fragment is capable of loading the program fragment to which
execution must be switched from the component containing a program
fragment vector.

The analytic execution architecture made up of a component containing
the program fragment $P$, a component containing the program fragment
vector $\alpha = \seq{P_1} \concat \ldots \concat \seq{P_n}$, and
service components $H_1$, \ldots, $H_k$ with channels named $f_1$,
\ldots, $f_k$, respectively, is described by the thread
\begin{ldispl}
\spt{\extr{P}}{\seq{\extr{P_1}} \concat \ldots \concat \seq{\extr{P_n}}}
 \useop{f_1} H_1 \ldots  \useop{f_k} H_k\;.
\end{ldispl}%
In the case where instructions of the form $\swo{i}$ do not occur in
$P$,
\begin{ldispl}
\Mt{\spt{\extr{P}}
        {\seq{\extr{P_1}} \concat \ldots \concat \seq{\extr{P_n}}}
     \useop{f_1} H_1 \ldots  \useop{f_k} H_k}
\end{ldispl}%
agrees with the process-algebraic description given in~\cite{BP04a}
of the analytic execution
architecture made up of a component containing the program $P$ and
service components $H_1$, \ldots, $H_k$ with channels named $f_1$,
\ldots, $f_k$, respectively.

\section{Poly-Threaded Strategic Interleaving}
\label{sect-TAptsi}

In this section, we take up the extension of \TApt\ with a form of
interleaving suited for multi-threading.

Multi-threading refers to the concurrent existence of several threads
in a program under execution.
Multi-threading is provided by contemporary programming languages such
as Java~\cite{GJSB00a} and C\#~\cite{HWG03a}.
Arbitrary interleaving, on which \ACP~\cite{BW90}, CCS~\cite{Mil89} and
CSP~\cite{Hoa85} are based, is not an appropriate abstraction when
dealing with multi-threading.
In the case of multi-threading, some deterministic interleaving
strategy is used.
In~\cite{BM04c}, we introduced a number of plausible deterministic
interleaving strategies for multi-threading.
We proposed to use the phrase strategic interleaving for the more
constrained form of interleaving obtained by using such a strategy.
In this section, we consider the strategic interleaving of fragmented
program behaviours.

As in~\cite{BM04c}, it is assumed that the collection of threads to be
interleaved takes the form of a thread vector.
In this section, we only cover the simplest interleaving strategy for
fragmented program behaviours, namely pure \emph{cyclic interleaving}.
In the poly-threaded case, cyclic interleaving basically operates as
follows: at each stage of the interleaving, the first thread in the
thread vector gets a turn to perform a basic action or to switch over to
another thread and then the thread vector undergoes cyclic permutation.
We mean by cyclic permutation of a thread vector that the first thread
in the thread vector becomes the last one and all others move one
position to the left.
If one thread in the thread vector deadlocks, the whole does not
deadlock till all others have terminated or deadlocked.
An important property of cyclic interleaving is that it is fair, i.e.\
there will always come a next turn for all active threads.
Other plausible interleaving strategies are treated in~\cite{BM04c}.
They can also be adapted to the poly-threaded\linebreak[2] case.

The extension of \TApt\ with cyclic interleaving is called \TAptsi.
It has the sorts $\Thr$ and $\TV$ of \TApt.
To build terms of sort $\Thr$, \TAptsi\ has the constants and operators
of \TApt\ to build terms of sort $\Thr$ and in addition the following
operator:
\begin{iteml}
\item
the \emph{poly-threaded cyclic strategic interleaving} operator
$\funct{\pciop}{\TV \cprod \TV}{\Thr}$.
\end{iteml}
To build terms of sort $\TV$, \TAptsi\ has the constants and operators
of \TApt\ to build terms of sort $\TV$.

\TAptsi\ has the axioms of \TApt\ and in addition the axioms given in
Tables~\ref{axioms-pci} and~\ref{axioms-std}.%
%
\begin{table}[!t]
\caption{Axioms for poly-threaded cyclic interleaving}
\label{axioms-pci}
\begin{eqntbl}
\begin{saxcol}
\pci{\emptyseq}{\alpha} = \Stop                      & & \axiom{PCI1} \\
\pci{\seq{\Stop} \concat \beta}{\alpha} =
\pci{\beta}{\alpha}                                  & & \axiom{PCI2} \\
\pci{\seq{\DeadEnd} \concat \beta}{\alpha} =
\std{\pci{\beta}{\alpha}}                            & & \axiom{PCI3} \\
\pci{\seq{\pcc{x}{a}{y}} \concat \beta}{\alpha} =
\pcc{\pci{\beta \concat \seq{x}}{\alpha}}{a}
    {\pci{\beta \concat \seq{y}}{\alpha}}            & & \axiom{PCI4} \\
\pci{\seq{\Switch{i}} \concat \beta}
     {\seq{x_1} \concat \ldots \concat \seq{x_n}} =
{} \\ \quad
\tls.\init \bapf
\pci{\beta \concat \seq{x_i}}
    {\seq{x_1} \concat \ldots \concat \seq{x_n}}
                               & \mif 1 \leq i \leq n  & \axiom{PCI5} \\
\pci{\seq{\Switch{i}} \concat \beta}
    {\seq{x_1} \concat \ldots \concat \seq{x_n}} =
\std{\pci{\beta}{\seq{x_1} \concat \ldots \concat \seq{x_n}}}
                               & \mif i = 0 \lor i > n & \axiom{PCI6} \\
\pci{\seq{\Extern} \concat \beta}
    {\seq{x_1} \concat \ldots \concat \seq{x_k}} =
{} \\ \quad
 \choiceop{k}
  (\tls.\init \bapf
   \pci{\beta \concat \seq{x_1}}
       {\seq{x_1} \concat \ldots \concat \seq{x_k}}, \ldots,
{} \\ \quad \phantom{\choiceop{k}(}
   \tls.\init \bapf
   \pci{\beta \concat \seq{x_k}}
       {\seq{x_1} \concat \ldots \concat \seq{x_k}})
                                                     & & \axiom{PCI7} \\
\pci{\seq{\Extern} \concat \beta}{\emptyseq} =
\std{\pci{\beta}{\emptyseq}}                         & & \axiom{PCI8}
\end{saxcol}
\end{eqntbl}
\end{table}
\begin{table}[!t]
\caption{Axioms for deadlock at termination}
\label{axioms-std}
\begin{eqntbl}
\begin{axcol}
\std{\Stop} = \DeadEnd                                 & \axiom{S2D1} \\
\std{\DeadEnd} = \DeadEnd                              & \axiom{S2D2} \\
\std{\pcc{x}{a}{y}} = \pcc{\std{x}}{a}{\std{y}}        & \axiom{S2D3} \\
\std{\Switch{i}} = \Switch{i}                          & \axiom{S2D4} \\
\std{\Extern} = \Extern                                & \axiom{S2D5} \\
\std{\choiceop{k}(x_1,\ldots,x_{k})} =
\choiceop{k}(\std{x_1},\ldots,\std{x_{k}})             & \axiom{S2D6}
\end{axcol}
\end{eqntbl}
\end{table}
In these tables, $a$ stands for an arbitrary action from $\BActTau$.
The axioms from Table~\ref{axioms-pci} express that threads are
interleaved as described above.
In these axioms, the auxiliary \emph{deadlock at termination} operator
$\stdop$ occurs.
The axioms from Table~\ref{axioms-std} show that this operator serves to
turn termination into deadlock.

Guarded recursion and the use mechanism can be added to \TAptsi\ as they
are added to \BTA\ in Sections~\ref{sect-BTA} and~\ref{sect-TSU},
respectively.

\section{Poly-Threaded Distributed Strategic Interleaving}
\label{sect-TAptdsi}

In this section, we take up the extension of \TApt\ with a form of
interleaving suited for distributed multi-threading.

In order to deal with threads that are distributed over the nodes of a
network, it is assumed that there is a fixed but arbitrary finite set
$\Loc$ of \emph{locations} such that $\Loc \subseteq \Nat$.
The set $\LAct$ of \emph{located basic actions} is defined by
$\LAct = \set{l.a \where l \in \Loc \land a \in \BAct}$.
Henceforth, basic actions will also be called
\emph{unlocated basic actions}.
The members of $\LAct \union \set{l.\Tau \where l \in \Loc}$ are
referred to as \emph{located actions}.

Performing an unlocated action $a$ is taken as performing $a$ at a
location still to be fixed by the distributed interleaving strategy.
Performing a located action $l.a$ is taken as performing $a$ at location
$l$.

Threads that perform unlocated actions only are called \emph{unlocated}
threads and threads that perform located actions only are called
\emph{located} threads.
It is assumed that the collection of all threads that exist concurrently
at the same location takes the form of a sequence of unlocated threads,
called the \emph{local thread vector} at the location concerned.
It is also assumed that the collection of local thread vectors that
exist concurrently at the different locations takes the form of a
sequence of pairs, one for each location, consisting of a location and
the local thread vector at that location.
Such a sequence is called a \emph{distributed thread vector}.

In the distributed case, cyclic interleaving basically operates the same
as in the non-distributed case.
In the distributed case, we mean by cyclic permutation of a distributed
thread vector that the first thread in the first local thread vector
becomes the last thread in the first local thread vector, all other
threads in the first local thread vector move one position to the left,
the resulting local thread vector becomes the last local thread vector
in the distributed thread vector, and all other local thread vectors in
the distributed thread vector move one position to the left.

When discussing interleaving strategies on distributed thread vectors,
we use the term current thread to refer to the first thread in the first
local thread vector in a distributed thread vector and we use the term
current location to refer to the location at which the first local
thread vector in a distributed thread vector is.

The extension of \TApt\ with cyclic distributed interleaving is called
\TAptdsi. \linebreak[2]
It has the sorts $\Thr$ and $\TV$ of \TApt\ and in addition the
following sorts:
\begin{itemize}
\item
the sort $\LThr$ of \emph{located threads};
\item
the sort $\DTV$ of \emph{distributed thread vectors}.
\end{itemize}
To build terms of sort $\Thr$, \TAptdsi\ has the constants and operators
of \BTA\ and in addition the following operators:
\begin{itemize}
\item
for each $n \in \Nat$,
the \emph{migration postconditional composition} operator
$\funct{\pcc{\ph}{\MigThr{n}}{\ph}}{\Thr \cprod \Thr}{\Thr}$.
\end{itemize}
To build terms of sort $\TV$, \TAptdsi\ has the constants and operators
of \TApt\ to build terms of sort $\TV$.
To build terms of sort $\LThr$, \TAptdsi\ has the following constants
and operators:
\begin{itemize}
\item
the \emph{deadlock} constant $\const{\DeadEnd}{\LThr}$;
\item
the \emph{termination} constant $\const{\Stop}{\LThr}$;
\item
for each $l \in \Loc$ and $a \in \BActTau$,
the \emph{postconditional composition} operator
$\funct{\pcc{\ph}{l.a}{\ph}}{\LThr \cprod \LThr}{\LThr}$;
\item
the \emph{deadlock at termination} operator
$\funct{\stdop}{\LThr}{\LThr}$;
\item
the \emph{poly-threaded cyclic distributed strategic interleaving}
operator $\funct{\pciop}{\DTV \cprod \TV}{\LThr}$.
\end{itemize}
To build terms of sort $\DTV$, \TAptdsi\ has the following constants and
operators:
\begin{itemize}
\item
the \emph{empty distributed thread vector} constant
$\const{\emptyseq}{\DTV}$;
\item
for each $l \in \Loc$,
the \emph{singleton distributed thread vector} operator
$\funct{\atv{\ph}{l}}{\TV}{\DTV}$;
\item
\sloppy
the \emph{distributed thread vector concatenation} operator
$\funct{\concat}{\DTV \cprod \DTV}{\DTV}$.
\end{itemize}
Throughout the paper, we assume that there are infinitely many variables
of sort $\LThr$, including $u,v,w$, and infinitely many variables of
sort $\DTV$, including $\delta$.

We introduce \emph{located action prefixing} as an abbreviation:
$l.a \bapf p$, where $p$ is a term of sort $\LThr$, abbreviates
$\pcc{p}{l.a}{p}$.

The overloading of $\DeadEnd$, $\Stop$, $\emptyseq$ and $\concat$ could
be resolved, but we refrain from doing so because it is always clear
from the context which constant or operator is meant.

Essentially, the sort $\DTV$ includes all sequences of pairs consisting
of a location and a local thread vector.%
\footnote
{The singleton distributed thread vector operators involve an implicit
 pairing of their operand with a location.}
The ones that contain a unique pair for each location are the proper
distributed thread vectors in the sense that the cyclic distributed
interleaving strategy outlined above is intended for them.
Improper distributed thread vectors that do not contain duplicate pairs
for some location are needed in the axiomatization of this strategy.
Improper distributed thread vectors that do contain duplicate pairs for
some location appear to have more than one local thread vector
at the location concerned.
Their exclusion would make it necessary for concatenation of distributed
thread vectors to be turned into a partial operator.
The cyclic distributed interleaving strategy never turns a proper
distributed thread vector into an improper one or the other
way\linebreak[2] round.

The poly-threaded cyclic distributed strategic interleaving operator
serves for interleaving of the threads in a proper distributed thread
vector according to the strategy outlined above, but with support of
explicit thread migration.
In the case where a local thread vector of the form
$\seq{\pcc{p}{\MigThr{n}}{q}} \concat \gamma$ with $n \in \Loc$ is
encountered as the first local thread vector, $\gamma$ becomes the last
local thread vector in the distributed thread vector and $p$ is appended
to the local thread vector at location $n$.
If $n \not\in \Loc$, then $\gamma \concat \seq{q}$ becomes the last
local thread vector in the distributed thread vector.

In the axioms for cyclic distributed interleaving discussed below,
binary functions $\app{l}$ (where $l \in \Loc$) from unlocated threads
and distributed thread vectors to distributed thread vectors are used.
For each $l \in \Loc$, $\app{l}$ maps each unlocated thread $x$ and
distributed thread vector $\delta$ to the distributed thread vector
obtained by appending $x$ to the local thread vector at location $l$ in
$\delta$.
The functions $\app{l}$ are defined in Table~\ref{def-app}.%
\begin{table}[!t]
\caption{Definition of the functions $\app{l}$}
\label{def-app}
\begin{eqntbl}
\begin{seqncol}
\app{l}(x,\emptyseq) = \emptyseq \\
\app{l}(x,\atv{\gamma}{l'} \concat \delta) =
\atv{\gamma \concat \seq{x}}{l} \concat \delta & \mif l = l' \\
\app{l}(x,\atv{\gamma}{l'} \concat \delta) =
\atv{\gamma}{l'} \concat \app{l}(x,\delta)     & \mif l \neq l'
\end{seqncol}
\end{eqntbl}
\end{table}

\TAptdsi\ has the axioms of \TApt\ and in addition the axioms given in
Tables~\ref{axioms-pcc-lthr}, \ref{axioms-pcdi}
and~\ref{axioms-std-lthr}.%
\begin{table}[!t]
\caption{Axioms for postconditional composition}
\label{axioms-pcc-lthr}
\begin{eqntbl}
\begin{axcol}
\pcc{u}{l.\Tau}{v} = \pcc{u}{l.\Tau}{u}                & \axiom{LT1}
\end{axcol}
\end{eqntbl}
\end{table}%
%
\begin{table}[!t]
\caption{Axioms for poly-threaded cyclic distributed interleaving}
\label{axioms-pcdi}
\begin{eqntbl}
\begin{saxcol}
\pci{\emptyseq}{\alpha} = \Stop                     & & \axiom{PCDI1} \\
\pci{\atv{\emptyseq}{l_1} \concat \ldots \concat \atv{\emptyseq}{l_k}}
    {\alpha} =
\Stop                                               & & \axiom{PCDI2} \\
\pci{\atv{\emptyseq}{l} \concat \delta}{\alpha} =
\pci{\delta \concat \atv{\emptyseq}{l}}{\alpha}     & & \axiom{PCDI3} \\
\pci{\atv{\seq{\Stop} \concat \gamma}{l} \concat \delta}{\alpha} =
\pci{\delta \concat \atv{\gamma}{l}}{\alpha}        & & \axiom{PCDI4} \\
\pci{\atv{\seq{\DeadEnd} \concat \gamma}{l} \concat \delta}{\alpha} =
\std{\pci{\delta \concat \atv{\gamma}{l}}{\alpha}}  & & \axiom{PCDI5} \\
\pci{\atv{\seq{\pcc{x}{a}{y}} \concat \gamma}{l} \concat \delta}{\alpha}
 = {} \\ \quad
\pcc{\pci{\delta \concat \atv{\gamma \concat \seq{x}}{l}}{\alpha}}{l.a}
    {\pci{\delta \concat \atv{\gamma \concat \seq{y}}{l}}{\alpha}}
                                                    & & \axiom{PCDI6} \\
\pci{\atv{\seq{\Switch{i}} \concat \gamma}{l} \concat \delta}
    {\seq{x_1} \concat \ldots \concat \seq{x_n}} =
{} \\ \quad
l.\tls.\init \bapf \pci{\delta \concat \atv{\gamma \concat \seq{x_i}}{l}}
                       {\seq{x_1} \concat \ldots \concat \seq{x_n}}
                      & \hsp{-3}\mif 1 \leq i \leq n  & \axiom{PCDI7} \\
\pci{\atv{\seq{\Switch{i}} \concat \gamma}{l} \concat \delta}
    {\seq{x_1} \concat \ldots \concat \seq{x_n}} =
{} \\ \quad
\std{\pci{\delta \concat \atv{\gamma}{l}}
         {\seq{x_1} \concat \ldots \concat \seq{x_n}}}
                      & \hsp{-3}\mif i = 0 \lor i > n & \axiom{PCDI8} \\
\pci{\atv{\seq{\Extern} \concat \gamma}{l} \concat \delta}
    {\seq{x_1} \concat \ldots \concat \seq{x_k}} =
{} \\ \quad
\choiceop{k}
 (l.\tls.\init \bapf
  \pci{\delta \concat \atv{\gamma \concat \seq{x_1}}{l}}
      {\seq{x_1} \concat \ldots \concat \seq{x_k}}, \ldots,
{} \\ \quad \phantom{\choiceop{k}(}
  l.\tls.\init \bapf
  \pci{\delta \concat \atv{\gamma \concat \seq{x_k}}{l}}
      {\seq{x_1} \concat \ldots \concat \seq{x_k}}) & & \axiom{PCDI9} \\
\pci{\atv{\seq{\Extern} \concat \gamma}{l} \concat \delta}{\emptyseq} =
\std{\pci{\delta \concat \atv{\gamma}{l}}{\emptyseq}}
                                                    & & \axiom{PCDI10} \\
\pci{\atv{\seq{\pcc{x}{\MigThr{n}}{y}} \concat \gamma}{l} \concat \delta}
    {\alpha} =
l.\Tau \bapf \pci{\app{n}(x,\delta \concat \atv{\gamma}{l})}{\alpha}
                                & \mif n \in \Loc     & \axiom{PCDI11} \\
\pci{\atv{\seq{\pcc{x}{\MigThr{n}}{y}} \concat \gamma}{l} \concat \delta}
    {\alpha} =
l.\Tau \bapf \pci{\delta \concat \atv{\gamma \concat \seq{y}}{l}}{\alpha}
                                & \mif n \not\in \Loc & \axiom{PCDI12}
\end{saxcol}
\end{eqntbl}
\end{table}
\begin{table}[!t]
\caption{Axioms for deadlock at termination}
\label{axioms-std-lthr}
\begin{eqntbl}
\begin{axcol}
\std{\Stop} = \DeadEnd                                & \axiom{LS2D1} \\
\std{\DeadEnd} = \DeadEnd                             & \axiom{LS2D2} \\
\std{\pcc{u}{a}{v}} = \pcc{\std{u}}{a}{\std{v}}       & \axiom{LS2D3} \\
\std{\Switch{i}} = \Switch{i}                         & \axiom{LS2D4} \\
\std{\Extern} = \Extern                               & \axiom{LS2D5} \\
\std{\choiceop{k}(u_1,\ldots,u_{k})} =
\choiceop{k}(\std{u_1},\ldots,\std{u_{k}})            & \axiom{LS2D6}
\end{axcol}
\end{eqntbl}
\end{table}
In these tables, $a$ stands for an arbitrary action from $\BActTau$.
The axioms from Table~\ref{axioms-pcdi} express that threads are
interleaved as described above.
The axioms from Tables~\ref{axioms-pcc-lthr} and~\ref{axioms-std-lthr}
are the axioms from Tables~\ref{axioms-BTA} and~\ref{axioms-std} adapted
to located threads.

Guarded recursion and the use mechanism can be added to \TAptdsi\ as
they are added to \BTA\ in Sections~\ref{sect-BTA} and~\ref{sect-TSU},
respectively.

\section{Fragment Searching by Implicit Migration}
\label{sect-TAptdsifs}

In Section~\ref{sect-TAptdsi}, it was assumed that the same program
fragment behaviours are available at each location.
In the case where this assumption does not hold, distributed
interleaving strategies with implicit migration of threads to achieve
availability of fragments needed by the threads are plausible.
We say that such distributed interleaving strategies take care of
fragment searching.
In this section, we introduce a variation of the distributed
interleaving strategy from Section~\ref{sect-TAptdsi} with fragment
searching.
This results in a theory called \TAptdsifs.

It is assumed that there is a fixed but arbitrary set $\Ind$ of
\emph{fragment indices} such that $\Ind = [1,n]$ for some $n \in \Nat$.

In the case of the distributed interleaving strategy with fragment
searching, immediately after the current thread has performed an action,
implicit migration of that thread to another location may take place.
Whether migration really takes place, depends on the fragments present
at the current location.
The current thread is implicitly migrated if the following condition is
fulfilled: on its next turn, the current thread ought to switch over to
a fragment that is not present at the current location.
If this conditions is fulfilled, then the current thread will be
migrated to the first among the locations where the fragment concerned
is present.

To deal with that, we have to enrich distributed thread vectors.
The new distributed thread vectors are sequences of triples, one for
each location, consisting of a location, the local thread vector at that
location, and the set of all indices of fragments that are present at
that location.

\TAptdsifs\ has the same sorts as \TAptdsi.
To build terms of the sorts $\Thr$, $\TV$ and $\LThr$, \TAptdsifs\
has the same constants and operators as \TAptdsi.
To build terms of sort $\DTV$, \TAptdsifs\ has the following constants
and operators:
\begin{itemize}
\item
the \emph{empty distributed thread vector} constant
$\const{\emptyseq}{\DTV}$;
\item
for each $l \in \Loc$ and  $I \subseteq \Ind$,
the \emph{singleton distributed thread vector} operator
$\funct{\btv{\ph}{l}{I}}{\TV}{\DTV}$;
\item
the \emph{distributed thread vector concatenation} operator
$\funct{\concat}{\DTV \cprod \DTV}{\DTV}$.
\end{itemize}
That is, the operator $\atv{\ph}{l}$ is replaced by the operators
$\btv{\ph}{l}{I}$.

Essentially, the sort $\TV$ includes all sequences of unlocated threads.
These sequences may serve as local thread vectors and as fragment
vectors.
The sequences that contain a thread for each fragment index are proper
fragment vectors.
In the case of fragment vectors that contain more threads, there appear
to be inaccessible fragments and in the case of fragment vectors that
contain less threads, there appear to be disabled fragments.
Inaccessible fragments have no influence on the effectiveness of cyclic
distributed interleaving with fragment searching.
However, disabled fragments may lead to implicit migration to a location
where a switch-over on the next turn is not possible as well.
Should this case arise, the next turn will yield deadlock.

In the axioms for cyclic distributed interleaving with fragment
searching discussed below, binary functions $\appfs{l}$
(where $l \in \Loc$) from unlocated threads and distributed thread
vectors to distributed thread vectors are used which are similar to the
functions $\app{l}$ used in the axioms for cyclic distributed
interleaving without fragment searching given in
Section~\ref{sect-TAptdsi}.
The functions $\appfs{l}$ are defined in Table~\ref{def-app-fs}.%
\begin{table}[!t]
\caption{Definition of the functions $\appfs{l}$}
\label{def-app-fs}
\begin{eqntbl}
\begin{seqncol}
\appfs{l}(x,\emptyseq) = \emptyseq \\
\appfs{l}(x,\btv{\gamma}{l'}{I} \concat \delta) =
\btv{\gamma \concat \seq{x}}{l}{I} \concat \delta
& \mif l = l' \\
\appfs{l}(x,\btv{\gamma}{l'}{I} \concat \delta) =
\btv{\gamma}{l'}{I} \concat \appfs{l}(x,\delta)
& \mif l \neq l'
\end{seqncol}
\end{eqntbl}
\end{table}

Moreover, a unary function $\pv$ on distributed thread vectors is
used which permutes distributed thread vectors cyclicly with implicit
migration as outlined above.
The function $\pv$ is defined using two auxiliary functions:
\begin{itemize}
\item
a function $\iml'$ mapping each fragment index $i$, distributed thread
vector $\delta$ and location $l$ to the first location in $\delta$ at
which the fragment with index $i$ is present if the fragment concerned
is present anywhere, and location $l$ otherwise;
\item
a function $\iml$ mapping each non-empty distributed thread vector
$\delta$ to the first location in $\delta$ at which the fragment is
present to which the current thread ought to switch over on its next
turn if the current thread is in that circumstance and the fragment
concerned is present somewhere, and the current location otherwise.
\end{itemize}
The function $\pv$, as well as the auxiliary functions $\iml'$ and
$\iml$, are defined in Table~\ref{def-pvfs}.%
\begin{table}[!p]
\caption{Definition of the functions $\iml'$, $\iml$ and $\pv$}
\label{def-pvfs}
\begin{eqntbl}
\begin{eqncol}
\iml'(i,\emptyseq,l') = l' \\
\iml'(i,\btv{\gamma}{l}{I} \concat \delta,l') = l
 \hfill \mif i \in I \\
\iml'(i,\btv{\gamma}{l}{I} \concat \delta,l') = \iml'(i,\delta,l')
 \hfill \mif i \not\in I
\eqnsep
\iml(\btv{\emptyseq}{l}{I} \concat \delta) = l \\
\iml(\btv{\seq{\Stop} \concat \gamma}{l}{I} \concat \delta) =
l \\
\iml(\btv{\seq{\DeadEnd} \concat \gamma}{l}{I} \concat \delta) =
l \\
\iml
 (\btv{\seq{\pcc{x}{a}{y}} \concat \gamma}{l}{I} \concat \delta) =
l \\
\iml
 (\btv{\seq{\Switch{i}} \concat \gamma}{l}{I} \concat \delta) =
l
  \hfill \mif i \in I \\
\iml
 (\btv{\seq{\Switch{i}} \concat \gamma}{l}{I} \concat \delta) =
\iml'(i,\delta,l)
 \hfill \mif i \not\in I \\
\iml(\btv{\seq{\Extern} \concat \gamma}{l}{I} \concat \delta) =
l \\
\iml
 (\btv{\seq{\pcc{x}{\MigThr{n'}}{y}} \concat \gamma}{l}{I} \concat
  \delta) =
l
\eqnsep
\pv(\emptyseq) = \emptyseq \\
\pv(\btv{\emptyseq}{l}{I} \concat \delta) =
\btv{\emptyseq}{l}{I} \concat \delta \\
\iml(\btv{\seq{x} \concat \gamma}{l}{I} \concat \delta) = l'
 \Implies
\pv(\btv{\seq{x} \concat \gamma}{l}{I} \concat \delta) =
\appfs{l'}(x,\delta \concat \btv{\gamma}{l}{I})
\end{eqncol}
\end{eqntbl}
\end{table}

\TAptdsifs\ has the axioms of \TApt\ and in addition the axioms given in
Tables~\ref{axioms-pcc-lthr}, \ref{axioms-pcdi-im}
and~\ref{axioms-std-lthr}.%
%
\begin{table}[!p]
\caption{Axioms for cyclic distributed interleaving with fragment
 searching}
\label{axioms-pcdi-im}
\begin{eqntbl}
\begin{saxcol}
\pci{\emptyseq}{\alpha} = \Stop                   & & \axiom{PCDIfs1} \\
\pci{\btv{\emptyseq}{l_1}{I} \concat \ldots \concat
     \btv{\emptyseq}{l_k}{I}}
    {\alpha} =
\Stop                                             & & \axiom{PCDIfs2} \\
\pci{\btv{\emptyseq}{l}{I} \concat \delta}{\alpha} =
\pci{\delta \concat \btv{\emptyseq}{l}{I}}{\alpha} & & \axiom{PCDIfs3} \\
\pci{\btv{\seq{\Stop} \concat \gamma}{l}{I} \concat \delta}{\alpha} =
\pci{\delta \concat \btv{\gamma}{l}{I}}{\alpha}   & & \axiom{PCDIfs4} \\
\pci{\btv{\seq{\DeadEnd} \concat \gamma}{l}{I} \concat \delta}{\alpha} =
\std{\pci{\delta \concat \btv{\gamma}{l}{I}}{\alpha}} & & \axiom{PCDIfs5} \\
\pci{\btv{\seq{\pcc{x}{a}{y}} \concat \gamma}{l}{I} \concat \delta}{\alpha}
 = {} \\ \quad
\pcc{\pci{\pv(\btv{\seq{x} \concat \gamma}{l}{I} \concat \delta)}{\alpha}}
    {l.a}
    {\pci{\pv(\btv{\seq{y} \concat \gamma}{l}{I} \concat \delta)}{\alpha}}
                                                  & & \axiom{PCDIfs6} \\
\pci{\btv{\seq{\Switch{i}} \concat \gamma}{l}{I} \concat \delta}
    {\seq{x_1} \concat \ldots \concat \seq{x_n}} =
{} \\ \quad
l.\tls.\init \bapf
\pci{\pv(\btv{\seq{x_i} \concat \gamma}{l}{I} \concat \delta)}
    {\seq{x_1} \concat \ldots \concat \seq{x_n}}
        & \hsp{-2.75} \mif i \in I \inter [1,n]     & \axiom{PCDIfs7} \\
\pci{\btv{\seq{\Switch{i}} \concat \gamma}{l}{I} \concat \delta}
    {\seq{x_1} \concat \ldots \concat \seq{x_n}} =
{} \\ \quad
\std{\pci{\delta \concat \btv{\gamma}{l}{I}}
         {\seq{x_1} \concat \ldots \concat \seq{x_n}}}
        & \hsp{-2.75} \mif i \not\in I \inter [1,n] & \axiom{PCDIfs8} \\
\pci{\btv{\seq{\Extern} \concat \gamma}{l}{I} \concat \delta}
    {\seq{x_1} \concat \ldots \concat \seq{x_k}} =
{} \\ \quad
\choiceop{k}
 (l.\tls.\init \bapf
  \pci{\pv(\btv{\seq{x_1} \concat \gamma}{l}{I} \concat \delta)}
      {\seq{x_1} \concat \ldots \concat \seq{x_k}}, \ldots,
{} \\ \quad \phantom{\choiceop{k}(}
  l.\tls.\init \bapf
  \pci{\pv(\btv{\seq{x_k} \concat \gamma}{l}{I} \concat \delta)}
      {\seq{x_1} \concat \ldots \concat \seq{x_k}}) & & \axiom{PCDIfs9} \\
\pci{\btv{\seq{\Extern} \concat \gamma}{l}{I} \concat \delta}{\emptyseq} =
\std{\pci{\delta \concat \btv{\gamma}{l}{I}}{\emptyseq}}
                                                    & & \axiom{PCDIfs10} \\
\pci{\btv{\seq{\pcc{x}{\MigThr{n}}{y}} \concat \gamma}{l}{I} \concat \delta}
    {\alpha} =
l.\Tau \bapf
\pci{\appfs{n}(x,\delta \concat \btv{\gamma}{l}{I})}
    {\alpha}                    & \mif n \in \Loc     & \axiom{PCDIfs11} \\
\pci{\btv{\seq{\pcc{x}{\MigThr{n}}{y}} \concat \gamma}{l}{I} \concat \delta}
    {\alpha} =
l.\Tau \bapf
\pci{\pv(\btv{\seq{y} \concat \gamma}{l}{I} \concat \delta)}
    {\alpha}                    & \mif n \not\in \Loc & \axiom{PCDIfs12}
\end{saxcol}
\end{eqntbl}
\end{table}

Guarded recursion and the use mechanism can be added to \TAptdsifs\ as
they are added to \BTA\ in Sections~\ref{sect-BTA} and~\ref{sect-TSU},
respectively.

\section{Conclusions}
\label{sect-concl}

We have developed a theory of the behaviours exhibited by sequential
programs on execution that covers the case where the programs have been
split into fragments and have used it to describe analytic execution
architectures suited for such programs.
It happens that the resulting description is terse.
We have also shown that threads and services as considered in this
theory can be viewed as processes that are definable over an extension
of \ACP\ with conditions.
Threads and services are introduced for pragmatic reasons only:
describing them as general processes is awkward.
For example, the description of analytic execution architectures suited
for programs that have been split into fragments would no longer be
terse if \ACP\ with conditions had been used.

We have also taken up the extension of the theory developed to the case
where the steps of fragmented program behaviours are interleaved in the
ways of non-distributed and distributed multi-threading.
This work can be further elaborated on the lines of~\cite{BM07a} to
cover issues such as prevention from migration for threads that
keep locks on shared services, load balancing by means of implicit
migration, and the use of implicit migration to achieve availability of
services needed by threads.

The object pursued with the line of research that we have carried on
with this paper is the development of a theoretical understanding of
the concepts sequential program and sequential program behaviour.
We regard the work presented in this paper also as a preparatory step in
the development of a theoretical understanding of the concept operating
system.

\bibliographystyle{spmpsci}
\bibliography{TA}

\end{document}